\newcolumntype{C}{>{\centering\arraybackslash}X} 
\DeclareMathOperator{\EX}{\mathbb{E}}
\newtheorem{exmp}{Example}[section]
\newtheorem{theorem}{Theorem}[section]
\title{\LARGE \bf
Integrated User Matching and Pricing in Round-Trip Car-Sharing
}
\author{Avalpreet Singh Brar$^{1}$, Rong Su$^{2}$,  Gioele Zardini$^{3}$, Jaskaranveer Kaur$^{1}$  
\thanks{This study is supported under the RIE2020 Industry Alignment Fund – Industry Collaboration Projects (IAF-ICP) Funding Initiative, as well as cash and in-kind contribution from the industry partner(s).}%
\thanks{$^{1}$Continental Automotive, Singapore ({ \{avalpreet.singh.brar, jaskaranveer.kaur\}@continental-corporation.com}).}
\thanks{$^{2}$ School of Electrical and Electronic Engineering, Nanyang Technological University, Singapore. (rsu@ntu.edu.sg).}
\thanks{$^{3}$Laboratory for Information and Decision Systems, Massachusetts Institute of Technology, Cambridge, MA, USA. (gzardini@mit.edu).}
}
\begin{document}

\maketitle
\thispagestyle{empty}
\pagestyle{empty}

\begin{abstract}
Traditional round-trip car rental systems mandate users to return vehicles to their point of origin, limiting the system’s adaptability to meet diverse mobility demands. 
This constraint often leads to fleet under-utilization and incurs high parking costs for idle vehicles. 
To address this inefficiency, we propose a N-user matching algorithm which is designed to facilitate one-way trips within the round-trip rental framework. Our algorithm addresses the joint problem of optimal pricing and user matching through a Two-Stage Integer Linear Programming (ILP)-based formulation. 
In the first stage, optimal rental prices are determined by setting a risk factor that governs the likelihood of matching a set of N-user. The second stage involves maximizing expected profit through a novel ILP-based user-matching formulation. Testing our algorithm on real-world scenarios demonstrates an approximate 35\% increase in demand fulfillment. Additionally, we assess the model’s robustness under uncertainty by varying factors such as the risk factor (probability of user’s ride acceptance at the offered price), cost factor (rental cost-to-fare ratio), and maximum chain length.
\end{abstract}

\section{INTRODUCTION}
The round-trip car-sharing model, which is prevalent within the car-sharing industry, requires that users return the vehicles to the original rental station.
This requirement significantly restricts the service's ability to meet the overall rental demand, often leading to fleet under-utilization and high parking costs for idle vehicles~\cite{le2014new}.
In contrast, one-way car-sharing offers greater flexibility, but faces spatial-temporal supply-demand imbalances, which can cause vehicles to accumulate at popular destinations, requiring dedicated staff to redistribute them to areas of high demand~\cite{illgen2019literature}.
Without an effective repositioning system, one-way car-sharing systems also face the fleet under-utilization problem. 

Fleet under-utilization is a well-studied issue in both one-way car-sharing~\cite{illgen2019literature} and taxi systems (both autonomous as well as human-driven)~\cite{zardini2022analysis}. 
Various techniques, such as vehicle rebalancing have been proposed to address this challenge. 
For instance, researchers in~\cite{boyaci2015optimization} developed a dynamic repositioning system for one-way car-sharing, while~\cite{brar2022supply} proposed a unified framework integrating vehicle scheduling, rebalancing, and optimized charging to enhance the operational performance of one-way EV car sharing. 
In autonomous fleets, authors in~\cite{pavone2012robotic} devised a supply-demand balancing model to determine the optimal dispatch rate for idle vehicles, ensuring manageable queue lengths for waiting customers. 
Additionally,~\cite{brar2021dynamic} conducted a comprehensive analysis of critical parameters affecting vehicle rebalancing systems, particularly in the face of uncertainties. 
In the context of taxi systems,~\cite{miao2015taxi} introduced an RHC-based approach to ensure fair supply distribution while minimizing idle travel distances, while~\cite{brar2020ensuring} tackled a multi-objective taxi fleet management problem, striving for an optimal balance between company profit and service fairness among individual drivers. 
Furthermore,~\cite{wang2023incentivized} explored four relocation strategies aimed at incentivizing users, revealing notable enhancements in utilization and service rates. 
To mitigate the supply-demand imbalance ~\cite{huang2020vehicle} conducted a comparative analysis between operator-driven and user-driven relocation methods, illustrating that a combined approach employing both strategies could solve the issue.~\cite{wang2021demand} introduced a two-level nested logit model to regulate demand patterns, aiming to alleviate supply-demand imbalances by taking into account alternative pick-up, alternative drop-off, and OD (origin and destination) pair-based incentives. 

While extensive research has targeted at solving the fleet under-utilization problem within one-way car-sharing systems, the issue remains relatively unexplored within the context of round-trip car-sharing systems.
This paper introduces a novel approach aimed at enhancing fleet utilization in round-trip car-sharing systems.
The proposed model allows for one-way trip requests within the traditional round-trip setup. 
By leveraging the proposed user-matching algorithm, subsets of users are paired together to create chains of pick-ups and drop-offs, ensuring that vehicles ultimately return to their originating station.
Furthermore, the model incorporates a unique pricing strategy to encourage user participation in these chains, thereby enhancing fleet efficiency.
This innovative approach offers a promising solution to address fleet under-utilization in round-trip car-sharing systems, paving the way for more efficient and sustainable mobility solutions.

The main contributions of this work are twofold.
Firstly, we formulate an integrated user-matching and pricing model which facilitates one-way car sharing while operating in a round-trip car-sharing setting.
Secondly, we evaluate the effectiveness of the proposed model via extensive simulations in a round-trip car-sharing environment leveraging real-world scenarios.
Specifically, we formulate and test three strategies: i) maximizing service rates, ii) maximizing profits, and iii) maximizing expected profits.
%

The paper is organized as follows. In Section II we introduce the N-user matching problem in the round-trip Car Rental setting, followed by a rigorous mathematical model and two-stage integrated user-matching and price control strategies in Section III. 
After a detailed analysis of the performance of the proposed model in Section IV, conclusions are drawn in Section V.

\section{N-User Matching Problem}
This section introduces the proposed round-trip car-sharing operation as well as the key components of the N-User Matching problem allowing one-way trips within the round-trip car-sharing framework.

\subsection{Round-Trip Car-Sharing System}
The proposed system comprises a fleet of~$n$ cars stationed at~$m$ locations, represented by set~$S$. 
One-way trips are allowed and pre-captured by the system. 
Users can rent a car from one station and return it to another station, conditional to the constraint that eventually the car is returned back to it's original station within a stipulated time window. 

Given a pool of one-way trip requests from the users, we aim to identify a set of~$N$ users who, after a chain of pick-ups and drop-off, can bring back the car to the original station. 
The base rental price between any two stations is assumed to be fixed. 
However, adjustments to the price can be made to incentivize the user participation in the matching process. There is a cost involved in moving the vehicles from one station to the other which is assumed to be a fraction of the base rental price. 
We define~$V$ to represent a set of users with one-way trip requirements who are ready to use the one-way car rental service at the base rental price. Such users are termed \emph{active} users as they can be matched without requiring any additional incentives. On the other hand,~$W$ denotes a set of users with one-way trip requirements who are not willing to use the one-way car rental service at the base rental price. These users are labeled \emph{inactive} users, indicating that an incentive is necessary to encourage them to utilize the one-way rental service. It is assumed that the set of inactive users is known a priori. 
$U$ denotes the set of all the users in the system:
 \begin{equation*}
     V \cup W = U, \hspace{1cm}  V \cap W = \emptyset.
 \end{equation*}
Fig. \ref{fig:chain} illustrates a chain of users matched together, where users~$\{v_i, v_{i+1}, v_{i+2}, v_{i+3}\}$ belong to the set of active users~$V$, and users~$\{ w_i, w_{i+1} \}$ belong to the set of inactive users~$W$. The matching process is subject to specific constraints, including factors such as pick-up and drop-off stations, as well as corresponding pick-up and drop-off times. 
If the matching constraints are satisfied, two active users~$\{ v_i, v_{i+1} \}$ can be matched with certainty, denoted by a solid line in the chain. 
Additionally, sufficient incentives must be provided to facilitate matching between an active and inactive user~$\{ v_{i+1}, w_{i} \}$ or between two inactive users~$\{ w_{i}, w_{i+1} \}$ and is denoted by a dotted line. 
Time is discretized into instants~$T_k$ and the planning horizon~$H$ is defined as a set of ordered timeslots~$H = \{T_1, T_2, ... T_{\tau}\}$ where, $\tau$ represents the total number of timeslots within a planning horizon. The duration of each timeslot as well as the planning horizon is pre-determined and fixed for the time being and could be relaxed in the future.
  
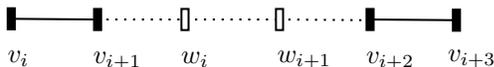
\begin{figure}[b]
    \centering
\tikzset{every picture/.style={line width=0.75pt}} 

\begin{tikzpicture}[x=0.3pt,y=0.4pt,yscale=-1,xscale=1]

\draw  [dash pattern={on 0.84pt off 2.51pt}]  (148.14,170.53) -- (377.5,170.53) ;
\draw  [dash pattern={on 0.84pt off 2.51pt}]  (377.5,170.53) -- (486.4,171.18) ;
\draw  [fill={rgb, 255:red, 255; green, 255; blue, 255 }  ,fill opacity=1 ] (253.83,160.53) -- (262.82,160.53) -- (262.82,180.53) -- (253.83,180.53) -- cycle ;
\draw  [fill={rgb, 255:red, 0; green, 0; blue, 0 }  ,fill opacity=1 ] (143.65,160.53) -- (152.63,160.53) -- (152.63,180.53) -- (143.65,180.53) -- cycle ;
\draw  [fill={rgb, 255:red, 255; green, 255; blue, 255 }  ,fill opacity=1 ] (373.01,160.53) -- (382,160.53) -- (382,180.53) -- (373.01,180.53) -- cycle ;
\draw  [fill={rgb, 255:red, 0; green, 0; blue, 0 }  ,fill opacity=1 ] (487.19,160.53) -- (496.18,160.53) -- (496.18,180.53) -- (487.19,180.53) -- cycle ;
\draw  [fill={rgb, 255:red, 0; green, 0; blue, 0 }  ,fill opacity=1 ] (34.75,159.87) -- (43.74,159.87) -- (43.74,179.87) -- (34.75,179.87) -- cycle ;
\draw    (39.25,169.87) -- (148.14,170.53) ;
\draw    (491.69,170.53) -- (600.58,171.18) ;
\draw  [fill={rgb, 255:red, 0; green, 0; blue, 0 }  ,fill opacity=1 ] (596.09,161.18) -- (605.07,161.18) -- (605.07,181.18) -- (596.09,181.18) -- cycle ;

\draw (31,197.4) node [anchor=north west][inner sep=0.75pt]    {$v_{i}$};
\draw (139,198.4) node [anchor=north west][inner sep=0.75pt]    {$v_{i+1}$};
\draw (250,198.4) node [anchor=north west][inner sep=0.75pt]    {$w_{i}$};
\draw (371,197.4) node [anchor=north west][inner sep=0.75pt]    {$w_{i+1}$};
\draw (483,198.4) node [anchor=north west][inner sep=0.75pt]    {$v_{i+2}$};
\draw (586,196.4) node [anchor=north west][inner sep=0.75pt]    {$v_{i+3}$};

\end{tikzpicture}
    
    \caption{Graphic representation of a chain of matched users.}
    \label{fig:chain}
\end{figure}

\subsection{Components of N-User Matching Model}
The one-way trip information associated with each user~$u_i \in U$ is denoted as follows: 
\begin{equation*}
    \xi_{u_i} = \{L^s_{u_i}, L^e_{u_i}, T^s_{u_i}, T^e_{u_i}, P_{u_i}, \hat{P}_{u_i}, P^{*}_{u_i}, C_{u_i}\}.
\end{equation*}

\subsubsection{Starting Station} $L^s_{u_i} \in S$ denotes the station from which user~$u_i \in U$ seeks to pick-up the car to start the trip. 
\subsubsection{End Station} $L^e_{u_i} \in S$ denotes the station where user~$u_i \in U$ will drop-off the car at the end of the trip. To avoid any round trips, only one-way trips are considered:
\begin{equation*}
    L^e_{u_i} \neq L^s_{u_i}.
\end{equation*}
\subsubsection{Starting Timeslot} $T^s_{u_i} \in H$ denotes the timeslot in which user~$u_i \in U$ will start the trip.
\subsubsection{End Timeslot} $T^e_{u_i} \in H$ denotes the timeslot in which user~$u_i \in U$ will end the trip. To ensure effective user matching, the timeslot duration is chosen such that the end timeslot is greater than start timeslot:
\begin{equation}\label{eqn:et_vs_st}
    T^e_{u_i} > T^s_{u_i}.
\end{equation}
\subsubsection{Base Rental Price} $P_{u_i} \in \mathbb{R_+}$ denotes the maximum price user~$u_i \in U$ has to pay if he makes the trip from station~$L^s_{u_i} \in S$ to station~$L^e_{u_i} \in S$. 
\subsubsection{Travel Cost} $C_{u_i} \in \mathbb{R}$ denotes the cost the company has to incur if user~$u_i \in U$ makes a trip from station~$L^s_{u_i} \in S$ to station~$L^e_{u_i} \in S$. 
So, the total profit that the company receives by serving customer~$u_i \in U$ is given as the difference between rental price and travel cost:
\begin{equation*}
    P_{u_i} - C_{u_i}.
\end{equation*}
\subsubsection{Price Threshold and Offered Price}: $P^*_{u_i} \in [0,P_{u_i}]$ denotes the price threshold, which is the maximum price at which user~$u_i \in U$ is ready to accept the trip offer. 
Price offered to user~$u_i \in U$ is defined as~$\hat{P}_{u_i} \in [0,P_{u_i}]$. 
The price threshold function as shown in Fig. \ref{fig:threshold} is defined for each user~$u_i \in U$ as follows:
\begin{equation*}
      I(P_{u_i}) = \begin{cases}1~&{\hat{P}_{u_i} \leq P^*_{u_i}}\\0~&{\hat{P}_{u_i} > P^*_{u_i}}\end{cases}.
\end{equation*}
For the active users the price threshold~$P^*_{u_i}$ is the same as the base rental price~$P_{u_i}$ and is thus deterministic. 
However, for the inactive users the threshold price~$P^*_{u_i}$ is less than the rental price~$P_{u_i}$ and the threshold is assumed to be a Gaussian random variable~$P^*_{u_i} \sim \mathcal{N}(\mu_{u_i}, \sigma_{u_i})$ whose parameters are known a priori. 

\begin{figure}[tb]
    \centering
    
\tikzset{every picture/.style={line width=0.75pt}} 

\begin{tikzpicture}[x=0.4pt,y=0.4pt,yscale=-1,xscale=1]


\draw    (267.4,217.19) -- (266.41,53.19) ;
\draw [shift={(266.4,51.19)}, rotate = 89.65] [color={rgb, 255:red, 0; green, 0; blue, 0 }  ][line width=0.75]    (10.93,-3.29) .. controls (6.95,-1.4) and (3.31,-0.3) .. (0,0) .. controls (3.31,0.3) and (6.95,1.4) .. (10.93,3.29)   ;
\draw    (267.4,217.19) -- (433.4,217.19) ;
\draw [shift={(435.4,217.19)}, rotate = 180] [color={rgb, 255:red, 0; green, 0; blue, 0 }  ][line width=0.75]    (10.93,-3.29) .. controls (6.95,-1.4) and (3.31,-0.3) .. (0,0) .. controls (3.31,0.3) and (6.95,1.4) .. (10.93,3.29)   ;
\draw [color={rgb, 255:red, 208; green, 2; blue, 27 }  ,draw opacity=1 ][line width=1.5]    (266.9,147.19) -- (319.4,147.19) ;
\draw  [dash pattern={on 0.84pt off 2.51pt}]  (318.4,147.19) -- (318.4,216.19) ;
\draw [color={rgb, 255:red, 208; green, 2; blue, 27 }  ,draw opacity=1 ][line width=1.5]    (318.4,217.19) -- (394.9,217.19) ;
\draw  [color={rgb, 255:red, 208; green, 2; blue, 27 }  ,draw opacity=1 ][fill={rgb, 255:red, 208; green, 2; blue, 27 }  ,fill opacity=1 ] (315,147.79) .. controls (315,145.92) and (316.52,144.4) .. (318.4,144.4) .. controls (320.27,144.4) and (321.79,145.92) .. (321.79,147.79) .. controls (321.79,149.67) and (320.27,151.19) .. (318.4,151.19) .. controls (316.52,151.19) and (315,149.67) .. (315,147.79) -- cycle ;
\draw  [fill={rgb, 255:red, 0; green, 0; blue, 0 }  ,fill opacity=1 ] (319.29,213.69) -- (317.51,213.69) -- (317.51,220.69) -- (319.29,220.69) -- cycle ;
\draw  [fill={rgb, 255:red, 0; green, 0; blue, 0 }  ,fill opacity=1 ] (395.79,213.69) -- (394.01,213.69) -- (394.01,220.69) -- (395.79,220.69) -- cycle ;

\draw (435,208.59) node [anchor=north west][inner sep=0.75pt]    {$P$};
\draw (180,53.4) node [anchor=north west][inner sep=0.75pt]    {$I(P_{u_i})$};
\draw (250,135) node [anchor=north west][inner sep=0.75pt]    {$1$};
\draw (384.9,224.59) node [anchor=north west][inner sep=0.75pt]  [font=\scriptsize]  {$P_{u_i}$};
\draw (311.9,222.59) node [anchor=north west][inner sep=0.75pt]  [font=\scriptsize]  {$P^{*}_{u_i }$};

\end{tikzpicture}
    \caption{Price threshold function For inactive users.}
    \label{fig:threshold}
\end{figure}
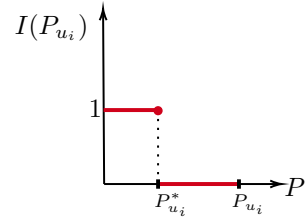

\section{Mathematical Model}
In this section we explain the proposed mathematical model to solve the N-user matching problem using price as an incentive.
We formulate the problem as a constrained optimization problem. First, we define the decision variable of the optimization model. Then, we discuss the logic that is used to match the users. 
Next, we discuss the chain-activation process as well as the cost-factor. Finally, we discuss the constraints and the objective function of the proposed optimization model as well as present the Two-Stage User-Matching approach. 
\subsection{Decision Variable}
We define a chain $c$ of length~$N$ as a set of~$N$ users where each user belongs to the set~$U$:
\begin{equation*}
    c =\{ u_1, u_2, ..., u_N \}, \hspace{0.2cm} \text{where} \hspace{0.2cm} u_1, u_2, ..., u_N\in U.
\end{equation*} 
Decision variable~$x_{u_1u_2...u_N} \in \{0,1\} = 1$ if users in $\{ u_1, u_2, ..., u_N \}$ are matched together. 
The maximum length that a chain of users can have is dependent on the size of the planning horizon~$H = \{T_1, T_2, ... T_{\tau}\}$. 
Since the end timeslot~$T^e_{u_i}$ is greater than the start timeslot~$T^s_{u_i}$ of each user's trip as defined in (\ref{eqn:et_vs_st}), the maximum chain length can be $\tau-1$. 
The length that a chain is allowed to have can depend on various factors. From an operational point of view, long chains can be desirable to ensure effective vehicle utilization by back-to-back bookings. 
However, detecting longer chains in a pool of users can be computationally intensive which may impact the performance of the user-matching system during real-time operation, especially if sequential re-implementation such as rolling horizon-based techniques are used. We define the set of all the possible chain lengths as the depth of the proposed algorithm.
\begin{equation}\label{chain_lengths}
    \mathcal{D}(\tau) = \{ 2, 3, 4, ... \tau - 1 \}.
\end{equation}
The set of decision variables contains all possible chains of users and is defined as:  
\begin{equation*}
    \mathcal{X} = \{  x_{u_1u_2},  x_{u_1u_2u_3}, ... , x_{u_1u_2...u_N}\},
\end{equation*}
where,~$u_1, u_2, ... u_N \in U$ and $N \in  \mathcal{D(\tau)}$ is the depth-cutoff, i.e., chains of length greater than~$N$ are not considered in the user-matching problem. 

\subsection{Feasible Chain}
Any chain~$c = \{ u_1, u_2, ..., u_N \} $ is called \emph{feasible} if the users in the chain can be matched. For a set of users to be matched, the following conditions have to be satisfied:
\begin{enumerate}
    \item Drop-off station of user~$u_i$ is the same as the pick-up station of user~$u_{i + 1}$ for all the users excluding the last user for which the drop-off station must be the same as the pick-up station of the first user to ensure that the vehicle is returned to the starting station where the chain starts:
\begin{equation*}
    \mathcal{L}(\{u_1u_2\ldots u_N\}) = 
    \begin{cases}
        1 & L^e_{u_{i}} = L^s_{u_{i+1}} \text{ for } 1 \leq i < N \\
        & L^e_{u_N} = L^s_{u_{1}} \\
        0 & \text{otherwise}
    \end{cases}.
\end{equation*}
    \item Drop-off timeslot of user~$u_i$ is the same as the pick-up timeslot of user~$u_{i + 1}$ for all the users excluding the last user:
\begin{equation*}
    \mathcal{T}(\{u_1u_2\ldots u_N\}) = 
    \begin{cases}
        1 & T^e_{u_{i}} = T^s_{u_{i+1}} \text{ for } 1 \leq i < N \\
        0 & \text{otherwise}
    \end{cases}.
\end{equation*}
\end{enumerate}
Any chain~$c = \{u_1u_2\ldots u_N\}$ is considered feasible if it follows the logic:
\begin{equation} \label{feasible}
    \mathcal{F}(c) = \mathcal{L}(c) \cdot  \mathcal{T}(c) = 1.
\end{equation}

\subsection{Chain Activation}
Any chain~$c = \{u_1,u_2,\ldots,u_N\}$ is defined to be active if all the users in the chain are active at the offered price. 
This means that each user is willing to take the ride at the offered rental price. A chain may consist of users who are already active at the base rental price as well as users who are inactive at the base rental price but will become active if the price is reduced to be less than a threshold value, i.e., an offered price~$\hat{P}_{u_i} \leq P^*_{u_i}$. 
As defined in the previous section the price threshold $P^*_{u_i}$ is assumed to be normally distributed with known parameters. The probability that the threshold price satisfies~$P^*_{u_i} \leq \hat{P}_{u_i}$ is given by:
\begin{equation*}
    \mathbb{P}(P^*_{u_i} < \hat{P}_{u_i}) =  \int_{0}^{\hat{P}_{u_i}} f_{P^*_{u_i}}(w) \,\text{d}w,
\end{equation*}
where~$f_{P^*_{u_i}}(w)$ is the Probability Density Function (PDF) of the threshold price, and~$(\mu_{u_i}, \sigma_{u_i})$ are its mean and standard deviation. 
The shaded region in Fig. \ref{fig:alpha} is indicative of~$\mathbb{P}(P^*_{u_i} < \hat{P}_{u_i})$ and hence the probability of an inactive user~$u_i \in U$ to remain inactive at the offered price~$\hat{P}_{u_i}$. 
The probability that the user~$u_i$ will become active at the rental price~$P_{u_i}$ is given as follows:
\begin{equation*}
    \mathbb{P}(P^*_{u_i} \geq P_{u_i}) = 1 - \mathbb{P}(P^*_{u_i} < P_{u_i}).
\end{equation*}
It is obvious that if two prices~$\hat{P}_{u_i}, \Tilde{P}_{u_i}$ are offered to a user, such that~$\hat{P}_{u_i} < \Tilde{P}_{u_i}$, then the probability of the user becoming active at the price~$\hat{P}_{u_i}$ is no less than the probability of the user becoming active at the price~$\Tilde{P}_{u_i}$, i.e,
\begin{equation*}
    \mathbb{P}(P^*_{u_i} < \hat{P}_{u_i}) \leq  \mathbb{P}(P^*_{u_i} < \Tilde{P}_{u_i}).
\end{equation*}
For any feasible chain~$c = \{u_1,u_2,\ldots,u_N\}$, let~$\hat{c} \subseteq c$ be a set of inactive users in chain~$c$. 
Given that the price~$\hat{P}_{\hat{u}_i}$ offered to each user~$\hat{u}_i \in \hat{c}$, the probability that the chain~$c$ will be active, i.e., all the users in the chain will be active at the offered price~$\hat{P}_{u_i}$ is given by:
\begin{equation*}
    \mathbb{P}(c) = \prod_{u_i \in c} \mathbb{P}(P^*_{u_i} \geq \hat{P}_{u_i}) = \prod_{\hat{u}_i \in \hat{c}} \mathbb{P}(P^*_{\hat{u}_i} \geq \hat{P}_{\hat{u}_i}).
\end{equation*}

\begin{figure}[tb]
    \centering
    \includegraphics[width=0.2\textwidth]{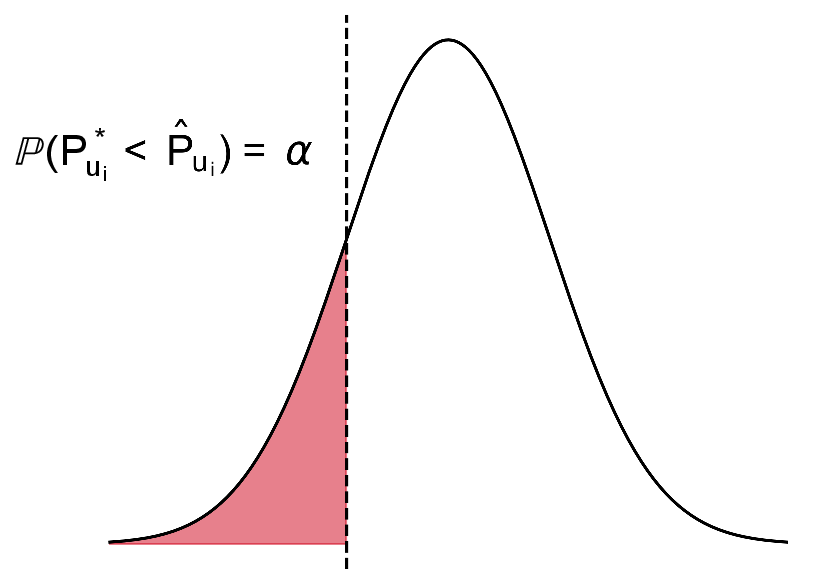} 
    \caption{$f_{P^*_{u_i}}(w)$: PDF of $P^*_{u_i}$ with parameters $(\mu_{u_i}, \sigma_{u_i})$.}
    \label{fig:alpha}
\end{figure}

\begin{theorem}\label{thm:mytheorem1}
If the price offered to each inactive user is~$\hat{P}_{\hat{u}_i} = \mathcal{P}(\theta_{\hat{u}_i}, \alpha)$ such that~$\mathbb{P}(P^*_{\hat{u}_i} < \mathcal{P}(\theta_{\hat{u}_i}, \alpha)) = \alpha$, then the probability that the chain~$c$, becomes active is given by:
\begin{equation}\label{activation}
    \mathbb{P}(c) = (1 - \alpha)^{|\hat{c}|},
\end{equation}
where,~$\hat{c} \subseteq c$ is a set of inactive users in~$c$ 
\end{theorem}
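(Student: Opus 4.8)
The plan is to build directly on the product representation of the chain-activation probability that was already derived just before the theorem statement, namely
\begin{equation*}
    \mathbb{P}(c) = \prod_{u_i \in c} \mathbb{P}(P^*_{u_i} \geq \hat{P}_{u_i}) = \prod_{\hat{u}_i \in \hat{c}} \mathbb{P}(P^*_{\hat{u}_i} \geq \hat{P}_{\hat{u}_i}).
\end{equation*}
Since this factorization is given, the entire task reduces to showing that, under the stated choice of offered prices, every factor in the right-hand product collapses to the same constant~$1-\alpha$, after which the result follows by simply counting the factors.

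First I would apply the complement rule to each inactive user~$\hat{u}_i \in \hat{c}$, writing~$\mathbb{P}(P^*_{\hat{u}_i} \geq \hat{P}_{\hat{u}_i}) = 1 - \mathbb{P}(P^*_{\hat{u}_i} < \hat{P}_{\hat{u}_i})$. Next I would substitute the defining hypothesis of the price map~$\mathcal{P}$: by assumption the offered price~$\hat{P}_{\hat{u}_i} = \mathcal{P}(\theta_{\hat{u}_i}, \alpha)$ is calibrated precisely so that~$\mathbb{P}(P^*_{\hat{u}_i} < \mathcal{P}(\theta_{\hat{u}_i}, \alpha)) = \alpha$, which makes each inactive factor equal to~$1-\alpha$ regardless of the individual threshold parameters~$(\mu_{\hat{u}_i}, \sigma_{\hat{u}_i})$. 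Finally, because the product ranges over the~$|\hat{c}|$ inactive users in the chain and each contributes the identical factor~$1-\alpha$, the product becomes~$(1-\alpha)^{|\hat{c}|}$, establishing~\eqref{activation}.

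I expect the only real subtlety, rather than a genuine obstacle, to be the justification of the two ingredients that make the product clean. The first is that the active users contribute trivial factors: since an active user's threshold satisfies~$P^*_{u_i} = P_{u_i}$ deterministically and such a user accepts at the base price with certainty, we have~$\mathbb{P}(P^*_{u_i} \geq \hat{P}_{u_i}) = 1$, which is exactly why the product over all of~$c$ already reduces to the product over~$\hat{c}$ in the statement above. The second, and more important to flag, is the independence assumption implicit in writing the chain probability as a product of per-user probabilities: the Gaussian thresholds~$P^*_{\hat{u}_i}$ of distinct inactive users must be treated as mutually independent random variables for the factorization to hold. I would note this modeling assumption explicitly, after which the calibration property of~$\mathcal{P}$ does all the remaining work and the computation is immediate.
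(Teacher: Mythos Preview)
Your proposal is correct and follows essentially the same approach as the paper: both invoke the independence of users' responses to justify the product form, then use the defining property of the $\alpha$-percentile price $\mathcal{P}(\theta_{\hat{u}_i},\alpha)$ to reduce every inactive-user factor to $1-\alpha$, yielding $(1-\alpha)^{|\hat{c}|}$. If anything, your version is slightly more explicit in flagging why active users contribute trivial factors and why independence is needed.
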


\begin{proof}
Given a chain of users~$c$, and assuming that the event of offering the rental price as well as response of each user is independent. Let~$\hat{c} \subseteq c$ be a set of inactive users, such that each inactive user~$\hat{u}_i \in \hat{c}$ has a price threshold~$P^*_{\hat{u}_i}$ which is normally distributed with parameters~$\theta_{\hat{u}_i} = (\mu_{\hat{u}_i}, \sigma_{\hat{u}_i})$. 
It is always possible to find a price~$\hat{P}_{u_i}$ such that~$\mathbb{P}(P^*_{\hat{u}_i} < \hat{P}_{u_i}) = \alpha$, where~$\alpha = (0,1)$. 
We define this price as the~$\alpha$-percentile~$\mathcal{P}(\theta_{\hat{u}_i}, \alpha)$, which depends on the parameters $\theta_{\hat{u}_i}$ of the PDF of~$P^*_{\hat{u}_i}$. 
We define~$\alpha \in (0,1)$ as the risk factor. 

If we offer~$\alpha$-percentile~$\mathcal{P}(\theta_{\hat{u}_i}, \alpha)$ price to an inactive user $\hat{u}_i$, then the probability that the user $u_i$ will become active at the offered price is:
\begin{equation*}
    \mathbb{P}(P^*_{u_i} \geq \mathcal{P}(\theta_{\hat{u}_i}, \alpha)) = 1-\alpha.
\end{equation*}

If we offer~$\alpha$-percentile of the price threshold distribution to each inactive user~$\hat{u}_i \in \hat{c}$, then using \ref{activation}:
\begin{equation*}
    \mathbb{P}(c) = \prod_{u_i \in \hat{c}} \mathbb{P}(P^*_{u_i} \geq P_{u_i}) = (1 - \alpha)^{|\hat{c}|}.
\end{equation*}
\end{proof}
 
\begin{exmp}
Consider a scenario with~$c = \{u_1, u_2\}$. 
Let~$u_1$ be active with the base rental price and travel cost as~$(P_{u_1}, C_{u_1})$. 
Let~$u_2$ be inactive with the base rental price and travel cost~$(P_{u_2}, C_{u_2})$, offered price~$\hat{P}_{u_2}$, and price threshold parameters ~$\mu_{u_2}, \sigma_{u_2},$ respectively.
The expected profit of the company through this chain completion is as follows:
\begin{equation*}
    \mathbb{P}(P^*_{u_2} \geq \hat{P}_{u_2}) \cdot (P_{u_1} - C_{u_1} + \hat{P}_{u_2} - C_{u_2}).
\end{equation*}
Since,~$P_{u_1} - C_{u_1} - C_{u_2}$ is fixed, let~$P_{u_1} - C_{u_1} - C_{u_2} = \Delta$. 
Then the above equation can be re-written as:
\begin{equation*}
    \mathbb{P}(P^*_{u_2} \geq \hat{P}_{u_2})\cdot (\Delta + \hat{P}_{u_2} ).
\end{equation*}
Offering high $\hat{P}_{u_2}$ leads to higher profit but there is a higher risk (less likelihood) of chain completion. So, there is a risk-return trade-off and the user-matching model needs to consider this aspect while calculating the optimal chains. Eq. (\ref{activation}) provides an efficient way to quantify and control the risk using the~$\alpha \in (0,1)$ value which we define as the risk-factor. 
In the above example if~$\alpha = 0.5$ then the risk of the chain completion is 50\% and if the $\alpha$-percentile price is offered to user~$u_2$, then~$\hat{P}_{u_2}=\mu_{u_2}$ and the expected profit from this chain will be~$0.5 \times (\Delta + \mu_{u_2})$.
\end{exmp}

\subsection{Cost Factor}
For each user~$u_i \in U$, a cost factor~$cf(u_i)$ is defined as a ratio of the travelling cost~$C_{u_i}$ to the base rental price $P_{u_i}$.
\begin{equation*}
    cf(u_i) = \frac{C_{u_i}}{P_{u_i}} \in (0,1) \hspace{0.1cm} \forall \hspace{0.1cm}  u_i \in U.
\end{equation*}
The company profit for serving an active user~$u_i$ is the difference between the base rental price and traveling cost:
\begin{equation*}
P_{u_i} - C_{u_i} = P_{u_i} - cf(u_i) \cdot P_{u_i} = (1-cf(u_i)) \cdot P_{u_i} . 
\end{equation*}
If an inactive user~$\hat{u}_i$ is offered a price~$\hat{P}_{\hat{u}_i} \in [0,  P_{\hat{u}_i})$ and the user become active at this price then the profit associated with serving this user is:
\begin{equation*}
\hat{P}_{\hat{u}_i} - C_{u_i} = \hat{P}_{\hat{u}_i} - cf(\hat{u}_i) \cdot P_{\hat{u}_i}.
\end{equation*}
The profit associated with a chain~$c = \{u_1,u_2, \ldots ,u_N\}$ consisting of both inactive users~$\hat{c} \subseteq c$ and active users~$c - \hat{c} \subseteq c$, is given as the sum of the profits associated with serving each user in the chain:
\begin{equation}\label{chain_profit}
    P(c) = \sum_{u_i \in \hat{c}} \hat{P}_{u_i} - cf(u_i) \cdot P_{u_i}   + \sum_{u_i \in c-\hat{c}} (1-cf(u_i)) \cdot P_{u_i}
\end{equation}
It is undesirable and also highly unlikely for~$cf(u_i) > 1$, so the~$cf(u_i)$ is constrained to be in~$(0,1]$.
Note that for trips of inactive users, the overall profit may be negative (i.e., a loss) if~$\hat{P}_{\hat{u}_i} < cf(\hat{u}_i) \cdot P_{\hat{u}_i}$ 

\subsection{Constraints}
In this section, we explain the constraints of the N-user matching problem. Given a set of users~$U$:
\begin{enumerate}
    \item For a chain of length~$N \in [2, \tau-1]$, the set of all possible chains of length~$N$ that contain user~$u_i$:
    \begin{enumerate}
        \item Two-User Chains: $\forall u_1, u_2 \in U$:
        \begin{equation*}
        x_2(u_i) = \{x_{u_iu_2}\} \cup \{x_{u_1u_i}\}.   
        \end{equation*}
        \item Three-User Chains: $\forall u_1, u_2, u_3 \in U$:
        \begin{equation*}
        x_3(u_i) = \{x_{u_iu_2u_3}\} \cup \{x_{u_1u_iu_3}\} \cup \{x_{u_1u_2u_i} \}.
        \end{equation*}
        \begin{equation*}
        \vdots
        \end{equation*}
        \item N-User Chains: $\forall u_1, u_2,  \ldots  u_N \in U$:
        \begin{equation*}
        x_N(u_i) = \{x_{u_i \ldots u_N}\} \cup \{x_{u_1u_i  \ldots  u_N}\} \ldots \cup \ldots\{x_{u_1 \ldots u_i} \}.   
        \end{equation*}
    \end{enumerate} 
    \item Each user~$u_i \in U$ can only be part of one chain $c$, which is enforced by the following constraint:
    \begin{equation}\label{c1}
    \begin{aligned}
        & \sum_{u_1, u_2 \in U} x_2(u_i) + \ldots + \sum_{u_1, u_2, \ldots u_N \in U} x_N(u_i) \leq 1.  
    \end{aligned}
    \end{equation}
    \item A chain of users~$\{u_1, \ldots ,u_N\}$ can only be matched if the chain is feasible, i.e., the chain satisfies the feasibility conditions as specified by (\ref{feasible}):
    \begin{equation}\label{c2}
        x_{u_1 \ldots u_N} \leq \mathcal{F}(\{u_1, \ldots ,u_N\}),
    \end{equation}
    where,
    \begin{equation*}
        \mathcal{F}(\{u_1, \ldots ,u_N\}) = \mathcal{L}(\{u_1, \ldots ,u_N\}) \cdot \mathcal{T}(\{u_1, \ldots ,u_N\})
    \end{equation*}
    \item Each possible chain has a binary variable associated with it which will decide if the chain of users will be matched or not by the system.
    \begin{equation}\label{c3}
        x_{u_1u_2 \ldots u_N} \in \{0, 1\} \hspace{0.3cm} u_1,u_2, \ldots ,u_N \in U.
    \end{equation}
\end{enumerate}

\subsection{Two-Stage Approach}
Consider the problem of maximizing the company profit by offering an incentivized price to the inactive users and matching the optimal set of users. 
Consider a chain~$c = \{u_1, \ldots, u_d \}$ of~$d$ users where,~$d \in [2,N]$, and the profit associated with the chain is~$P(\{u_1, \ldots, u_d \})$. 
The problem is to select an optimal set of chains that maximize the company profit while satisfying constraints~$\mathcal{C} = \{(\ref{c1}), (\ref{c2}), (\ref{c3})\}$:
\begin{equation} \label{intractable}
\begin{aligned}
    & \max_{x, \hat{P}}  \sum_{d=2}^N \sum_{u_1, ..., u_d \in U} x_{u_1 \ldots u_d} \cdot P(\{u_1, \ldots, u_d \}) \\  
    \textrm{s.t.}  \quad  & \sum_{d=2}^N \sum_{u_1, u_2, \ldots u_d \in U} x_d(u_i) \leq 1  \hspace{0.2cm} \forall u_i \in U\\ 
    & x_{u_1 \ldots u_d} \leq \mathcal{F}(\{u_1, \ldots ,u_d\})  \forall u_1, \ldots ,u_d \in U, d \in [2,N]\\
    & \hat{P}_{u_i} \in [0, P_{u_i}) \hspace{0.2cm} \forall u_i \in U \\
    & x_{u_1u_2 \ldots u_d} \in \{0, 1\} \hspace{0.2cm} \forall u_1,u_2, \ldots ,u_d \in U, d \in [2,N].
\end{aligned}
\end{equation}
The above problem is a Mixed Integer Linear Programming (MILP) problem because the objective function involves a product of binary variable~$x_{u_1 \ldots u_N}$ and~$P(\{u_1, \ldots, u_N \})$ is a function of~$\hat{P}_{u_i}$ as in (\ref{chain_profit}). 

To make a tractable formulation, we use the special feature of this problem. We first show how the above problem can be converted to an ILP in a setting where price thresholds are deterministic and known a priori.

\begin{theorem}\label{thm:mytheorem2}
Given a chain of users~$\{u_1, ... u_i, ... u_d \}$ and their deterministic threshold prices~$P^*_{u_i}$, the optimal price $\Tilde{P}_{u_i}$ offered to user $u_i$ is the same as the threshold price:
\begin{equation*}
    \Tilde{P}_{u_i} = P^*_{u_i} \hspace{0.2cm} \forall \hspace{0.2cm}  i \in [1,d] .
\end{equation*}
\end{theorem}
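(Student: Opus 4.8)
The plan is to exploit the monotonicity of the chain-profit function in the offered prices together with the step structure of the acceptance indicator $I(P_{u_i})$. First I would observe that, for the chain $c$ to be formed and its trip sequence actually realized, every user must accept the price offered to them; by the definition of the threshold function this is exactly the requirement $\hat{P}_{u_i} \leq P^*_{u_i}$ for each $u_i$. Hence the price-optimization subproblem, conditional on selecting the chain, reduces to maximizing the chain profit $P(c)$ in (\ref{chain_profit}) over the offered prices $\hat{P}_{u_i}$ subject to the acceptance constraints $\hat{P}_{u_i} \leq P^*_{u_i}$.

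Next I would examine how $P(c)$ depends on each offered price. In (\ref{chain_profit}) the contribution of an active user is fixed at $(1-cf(u_i))P_{u_i}$ and does not involve $\hat{P}_{u_i}$, while the contribution of an inactive user is $\hat{P}_{u_i} - cf(u_i)P_{u_i}$, whose cost term $cf(u_i)P_{u_i}$ is independent of the offered price. Therefore $P(c)$ is strictly increasing in each $\hat{P}_{u_i}$ with unit slope, so the objective is maximized by pushing every offered price to the largest value permitted by its acceptance constraint, namely $\hat{P}_{u_i} = P^*_{u_i}$. For active users the threshold coincides with the base rental price, $P^*_{u_i} = P_{u_i}$, so charging the base price is again $\Tilde{P}_{u_i} = P^*_{u_i}$, and the claimed identity holds uniformly over $i \in [1,d]$.

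The step I expect to require the most care is handling the discontinuity of the acceptance indicator rather than treating the problem as a smooth maximization. Because the chain profit collapses to zero the instant any offered price exceeds its threshold, I would argue by cases: if some $\hat{P}_{u_i} > P^*_{u_i}$ then the chain fails and earns nothing, which is weakly dominated by the feasible profile $\hat{P}_{u_j} = P^*_{u_j}$ whenever that profile yields nonnegative profit; chains that remain unprofitable even at threshold prices are simply not selected by the binary variable $x$ and so fall outside the scope of the price choice. Within the feasible region $\hat{P}_{u_i} \leq P^*_{u_i}$ the monotonicity argument then pins the optimum to the boundary. I would close by noting that this deterministic result is the certainty limit of the $\alpha$-percentile rule of Theorem \ref{thm:mytheorem1}: when the threshold is known exactly, offering it captures the maximum extractable revenue while still guaranteeing acceptance, which is precisely the structural fact that lets the MILP (\ref{intractable}) be rewritten as a tractable ILP.
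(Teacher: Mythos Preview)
Your argument is correct and follows essentially the same route as the paper: both proofs hinge on the two-case observation that offering $\hat{P}_{u_i} > P^*_{u_i}$ kills the chain (zero profit), while for $\hat{P}_{u_i} < P^*_{u_i}$ the chain profit is strictly smaller than at the threshold because $P(c)$ is increasing in each offered price. The paper packages this as a proof by contradiction on a single inactive user and then appeals to extension, whereas you phrase it as a direct monotonicity-plus-boundary argument; the logical content is the same.
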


\begin{proof}
We use proof by contradiction to prove this statement. 
The objective that needs to be maximized is:
\begin{equation}\label{profit_maximization_intractable}
\begin{aligned}
    &\max_{x, \hat{P}}  \sum_{d=2}^N \sum_{u_1, ..., u_d \in U} x_{u_1 \ldots u_d} \cdot P(\{u_1, \ldots, u_d \}).
\end{aligned}
\end{equation}
Let~$\Tilde{X}$ be the set of chains that maximize the profit, and let~$\Tilde{P}$ be the optimal prices offered. 
Let~$\Tilde{x} = \{u_1, ... u_i, ... u_d \} \in \Tilde{X}$ containing one inactive user $u_i$ be one of the chains which is selected, i.e., ~$x_{u_1...u_i...u_d}=1$. 
Let the price offered to~$u_i$ be~$\Tilde{P}_{u_i}$ with his threshold price being~$P^*_{u_i}$. Let us assume that~$\tilde{P}_{u_i} = P^*_{u_i}$ is the sub-optimal price, then there are two possible cases for the optimal price offered to user~$u_i$:
\begin{enumerate} 
    \item Case1: $\Tilde{P}_{u_i} > P^*_{u_i}$ \\
    In this case $u_i$ will become inactive so the chain~$\Tilde{x} = \{u_1, ... u_i, ... u_N \}$ will be inactive, leading to a 0 profit. 
    \item Case2: $\Tilde{P}_{u_i} < P^*_{u_i}$ \\
    In this case $u_i$ will become active so the chain $\Tilde{x} = \{u_1, ... u_i, ... u_N \}$ will become active, thus using (\ref{chain_profit}), the profit corresponding to this chain is:
    \begin{equation*}
        \sum_{j=1}^N ( \Tilde{P}_{u_j} - C_{u_j} ) < \sum_{j=1, j \neq i}^N ( \Tilde{P}_{u_j} - C_{u_j} ) + P^*_{u_i} - C_{u_i}.
    \end{equation*}
\end{enumerate}
In both cases, the profit is higher when the offered price is the same as the threshold price which is a contradiction because we assumed that~$\tilde{P}_{u_i} = P^*_{u_i}$ is a sub-optimal price.
So, the maximum profit occurs when the offered price is the same as the threshold price. The same logic can be extended to the chains with more than one inactive user. So, to maximize the profits, all inactive users in a chain must be offered the price threshold. QED
\end{proof} 

Next, we extend this logic to a stochastic setting, where the price threshold is stochastic and follows a normal distribution whose parameters are known a priori. 
We show that problem (\ref{intractable}) can be converted into an ILP for a given risk factor~$\alpha \in (0,1)$.
\begin{theorem}\label{thm:mytheorem3}
    Consider a chain of users $\{u_1, ... u_i, ... u_d \}$ and their price threshold parameters $\theta_{u_i}$. For a fixed upper limit of risk factor $\bar{\alpha} \in (0,1)$, the optimal price $\Tilde{P}_{u_i}$ offered to inactive user $u_i$ is $\mathcal{P}(\theta_{u_i}, \bar{\alpha})$
\begin{equation*}
    \Tilde{P}_{u_i} = \mathcal{P}(\theta_{u_i}, \bar{\alpha}) \hspace{0.1cm} \forall \hspace{0.1cm}  i \in [1,d].
\end{equation*}
Where, $P^*_{u_i} \sim N(\mu_{u_i}, \sigma_{u_i})$ and $\mathbb{P}(P^{*}_{u_i} \leq \mathcal{P}(\theta_{u_i}, \bar{\alpha})) = \bar{\alpha}$ 
\end{theorem}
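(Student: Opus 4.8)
The plan is to mirror the contradiction argument used for Theorem~\ref{thm:mytheorem2}, replacing the deterministic threshold $P^*_{u_i}$ by the $\bar{\alpha}$-percentile $\mathcal{P}(\theta_{u_i}, \bar{\alpha})$ and exploiting the monotonicity of the acceptance probability in the offered price. First I would fix the risk factor at its committed upper value $\bar{\alpha}$, so that by Theorem~\ref{thm:mytheorem1} the activation probability of any chain with inactive set $\hat{c}$ is $(1-\bar{\alpha})^{|\hat{c}|}$ and the expected profit of a selected chain factorizes as $(1-\bar{\alpha})^{|\hat{c}|}\cdot P(c)$, with $P(c)$ given by~(\ref{chain_profit}). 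Since this activation factor is a fixed positive constant once every inactive user is held at risk $\bar{\alpha}$, maximizing the expected profit of the chain reduces to maximizing the revenue term $P(c)$, which is the quantity I will analyze.

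Next I would isolate a single inactive user $u_i$ in an optimal chain $\tilde{x}$ and suppose, for contradiction, that the optimal offered price satisfies $\tilde{P}_{u_i} \neq \mathcal{P}(\theta_{u_i}, \bar{\alpha})$, splitting into the same two cases as before. In Case~1, $\tilde{P}_{u_i} > \mathcal{P}(\theta_{u_i}, \bar{\alpha})$: by the defining relation $\mathbb{P}(P^*_{u_i} < \mathcal{P}(\theta_{u_i}, \bar{\alpha})) = \bar{\alpha}$ together with the monotonicity of the CDF recalled just before Theorem~\ref{thm:mytheorem1}, the non-acceptance probability obeys $\mathbb{P}(P^*_{u_i} < \tilde{P}_{u_i}) > \bar{\alpha}$, so the chain's activation probability drops strictly below $(1-\bar{\alpha})^{|\hat{c}|}$, violating the fixed risk budget and strictly lowering the expected profit. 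In Case~2, $\tilde{P}_{u_i} < \mathcal{P}(\theta_{u_i}, \bar{\alpha})$: the revenue contribution $\tilde{P}_{u_i} - C_{u_i}$ of $u_i$ in~(\ref{chain_profit}) is strictly smaller than the contribution $\mathcal{P}(\theta_{u_i}, \bar{\alpha}) - C_{u_i}$ obtained by raising the price to the $\bar{\alpha}$-percentile, while every other user's term and the activation factor are unchanged; hence $P(c)$, and therefore the expected profit, strictly increases. Both cases contradict optimality of $\tilde{P}_{u_i}$, forcing $\tilde{P}_{u_i} = \mathcal{P}(\theta_{u_i}, \bar{\alpha})$.

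Finally I would extend from one inactive user to all of $\hat{c}$ by the same one-at-a-time perturbation: the revenue~(\ref{chain_profit}) is additively separable across users and the activation factor $(1-\bar{\alpha})^{|\hat{c}|}$ is symmetric in the inactive set, so the argument applies coordinatewise and every inactive user must be offered its own $\bar{\alpha}$-percentile. I expect the main obstacle to be making Case~2 airtight, since unlike the deterministic Theorem~\ref{thm:mytheorem2} a \emph{lower} price simultaneously raises the acceptance probability, creating a genuine risk--return tradeoff; the cleanest way to close this is to interpret the statement as the Stage-1 decision, where the risk factor is held fixed at $\bar{\alpha}$ (pinning the activation factor $(1-\bar{\alpha})^{|\hat{c}|}$) and only the revenue $P(c)$ is optimized, so that any price below the $\bar{\alpha}$-percentile leaves slack in the risk budget that is strictly improved by raising the price up to $\mathcal{P}(\theta_{u_i}, \bar{\alpha})$ without exceeding $\bar{\alpha}$. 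Invoking the strict monotonicity of $\mathcal{P}(\theta_{u_i}, \cdot)$ then guarantees such an increase is both feasible and beneficial, and the outer tradeoff over the value of $\bar{\alpha}$ itself is deferred to the subsequent matching stage.
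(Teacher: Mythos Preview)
Your proposal is correct and follows essentially the same two-case contradiction argument as the paper: Case~1 ($\tilde{P}_{u_i} > \mathcal{P}(\theta_{u_i}, \bar{\alpha})$) violates the risk upper bound $\bar{\alpha}$, and Case~2 ($\tilde{P}_{u_i} < \mathcal{P}(\theta_{u_i}, \bar{\alpha})$) yields strictly lower chain profit via the revenue sum in~(\ref{chain_profit}). Your discussion of the Case~2 subtlety (pinning the activation factor at $(1-\bar{\alpha})^{|\hat{c}|}$ so that only $P(c)$ is compared) is more explicit than the paper's, which simply compares the profit sums without commenting on the simultaneous change in acceptance probability, but this is an elaboration of the same idea rather than a different route.
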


\begin{proof}
We use proof by contradiction to prove this statement. The objective that needs to be maximized is (\ref{profit_maximization_intractable}). Let the risk factor associated with offering $\Tilde{P}_{u_i}$  price to inactive user $u_i$ is $\Tilde{\alpha}$. 
    
We first assume that the profit earned by formation of chain $\Tilde{x}$ is not optimal when $\Tilde{P}_{u_i} = \mathcal{P}(\theta_{u_i}, \bar{\alpha})$, then the optimal $\Tilde{P}_{u_i}$ price may be in either of these two cases: 

\begin{enumerate} 
    \item Case 1: $\Tilde{P}_{u_i} > \mathcal{P}(\theta_{u_i}, \bar{\alpha})$ \\
     In this case, the probability that the price threshold is more than the offered price is less than the probability that the price threshold is more than the $\bar{\alpha}$-percentile.
    \begin{equation*}
    \begin{aligned}
    & \mathbb{P}(\Tilde{P}_{u_i} < P^*_{u_i}) < \mathbb{P}(\mathcal{P}(\theta_{u_i}, \bar{\alpha})< P^*_{u_i}) \\
    & \implies 1 - \hat{\alpha} < 1 - \bar{\alpha} \\
    & \implies \bar{\alpha} < \hat{\alpha}.
    \end{aligned}
    \end{equation*}
    But $\bar{\alpha}$ is the upper limit of risk factor. Therefore, it is a contradiction.
    \item Case 2: $\Tilde{P}_{u  _i} < \mathcal{P}(\theta_{u_i}, \bar{\alpha})$ \\
    In this case, the risk factor is within the upper limit, and the overall profit from the chain $\Tilde{x} = \{u_1, ... u_i, ... u_N \}$ is given as 
    \begin{equation*}
    \begin{aligned}
        &\sum_{j=1}^N ( \Tilde{P}_{u_j} - C_{u_j} ) = \sum_{j=1, j \neq i}^N ( \Tilde{P}_{u_j} - C_{u_j} ) + \Tilde{P}_{u_i} - C_{u_i} \\
        & < \sum_{j=1, j \neq i}^N ( \Tilde{P}_{u_j} - C_{u_j} ) + \mathcal{P}(\theta_{u_i}, \bar{\alpha}) - C_{u_i}.
    \end{aligned}
    \end{equation*}
    This means that the profit earned from the $\Tilde{x}$ chain formation is lower than that when the offered price $\Tilde{P}_{u_i} < \mathcal{P}(\theta_{u_i}, \bar{\alpha})$.
\end{enumerate}
This is a contradiction since the profit for the offered price $\mathcal{P}(\theta_{u_i}, \bar{\alpha})$ is greater in both cases, so $\Tilde{P}_{u_i} = \mathcal{P}(\theta_{u_i}, \bar{\alpha})$ is the optimal price that should be offered to the user $u_i$ to ensure the maximum profits while maintaining the risk levels within the upper limit of $\bar{\alpha}$. QED 
\end{proof}

\subsection{Objective Function}
Given the risk factor $\alpha$, using Theorem \ref{thm:mytheorem3} the optimal price for inactive users is set to~$\mathcal{P}(\theta_{u_i}, \alpha)$. We introduce three different objective functions, which capture three different service considerations. Thus, leading to three different models. 

\paragraph*{Service Rate Maximization}
In this model, we aim to maximize the service rate, which is the total number of users that can be matched in a chain and hence be served. This objective doesn't take offered prices into account and the Service Rate Maximization problem is defined as follows:
\begin{equation*}
    \begin{aligned}
    \max_{\mathcal{X}} \quad & \sum_{u_i \in U} \sum_{j=2}^{N} x_j(u_i) \\
    \textrm{s.t.} \quad & (6),(7),(8)
    \end{aligned}
\end{equation*}
\paragraph*{Profit Maximization}
In this model, we aim to maximize the overall profit that the company will make out of matching the users. This objective is similar to the objective of the problem (\ref{intractable}), However, the difference is that here offered price is pre-calculated and is no-longer a decision variable. This Profit Maximization problem is defined as follows:
    \begin{equation*}
    \begin{aligned}
    \max_{x}  \sum_{d=2}^N \quad & \sum_{u_1, ..., u_d \in U} x_{u_1 \ldots u_d} \cdot P(\{u_1, \ldots, u_d \}) \\
    \textrm{s.t.} \quad & (6),(7),(8)
    \end{aligned}
    \end{equation*}
\paragraph*{Expected Profit Maximization}
In this model, we aim to maximize the expected profit that the company will make out of matching the users. Due to the presence of the inactive users, there is always a risk that a chain may or may not be activated. 
The profit~$P(\{u_1, u_2,..u_N \})$, associated with a chain~$c=\{u_1, u_2,..u_N \}$ can be availed only if the chain~$c$ is active. 
Hence, the expected profit associated with any chain is given by:
    \begin{equation*}
        \EX_{\theta} [P(c)] = \mathbb{P}(c) \cdot P(c),
    \end{equation*}
where~$\mathbb{P}(\{u_1, u_2,..u_N \}) = (1-\alpha)^{|\hat{c}|}$, is the probability that chain $c = \{u_1, u_2,..u_N \}$ is active, and is calculated using Theorem \ref{thm:mytheorem1}, where $\hat{c} \subseteq C$, is a set of inactive users in chain $c$ and $\alpha \in (0,1)$ is the risk factor.
Leveraging on Theorem \ref{thm:mytheorem3}, the optimal profit earned from a chain can be derived as follows. The optimal offered price for the active users is the same as the rental price and for the inactive users, is the same as the $\alpha$-percentile of PDF of the price threshold.
\begin{equation*}
P(\{u_1, u_2,..u_N \}) = \sum_{j=1}^N (\tilde{P}_{u_j} - C_{u_j}),
\end{equation*}
where 
\begin{equation*}
\tilde{P}_{u_j} = 
\begin{cases}
        P_{u_j} & u_j \in c - \hat{c}\\
        \mathcal{P}(\theta_{u_j}, \alpha) & u_j \in \hat{c}. \\
        \end{cases}
\end{equation*}

The Expected Profit Maximization problem is defined as follows:
    \begin{equation*}
    \begin{aligned}
    \max_{x \in \mathcal{X}} \quad & \sum_{d=2}^{N} \sum_{u_1, ..., u_d \in U} x_{u_1 \ldots u_d} \cdot \EX_{\theta} [P(\{u_1, \ldots, u_d \})] \\
    \textrm{s.t.} \quad & (6),(7),(8)
    \end{aligned}
    \end{equation*}
In this study, the Expected Profit Maximization is defined as the proposed model.   


\section{Discussion of Experimental Results}
In this section, we report the details of an experimental setup and discuss the observations from the experiments. 
Real-world data from New York City, USA, was used to extract the trip historic information~\cite{donovan2014new}. 
Timeslot duration was fixed to 10 mins, and the planning horizon~$H$ was fixed to 60 mins. 
So, the planning horizon was divided into $\tau = 6$ timeslots:
\begin{equation*}
    H = \{T_1, T_2, T_3, T_4, T_5, T_6 \}
\end{equation*}
All the trips that were shorter than 10 mins were removed to ensure (\ref{eqn:et_vs_st}) is satisfied. The maximum possible length of a chain was~$\tau - 1 = 5$, so the chains of length $N \in \{2,3,4,5\}$ were allowed. 
The operating zones were split into smaller regions and the trips were randomly allocated to a region within a zone. 
Each region is assumed to have one station, and was assigned with a unique Station ID. 
All the trips started and ended at these stations. Trips that started and ended at the same station were removed from the simulation to avoid A-to-A loops in the chain. 
Each trip was allocated to a unique user ID, which represented the user request. A user was randomly classified as an active or an inactive user. 
In total, 80\% of the users were considered as active at the base-price and 20\% of the users were considered as inactive at the base-price. 
This choice was arbitrary, and the objective was to test the proposed algorithm's chain detection capability. 
2413 users were considered in the simulation, out of which 110 had a round-trip type demand, this demand was assumed to be served using the traditional round-trip car-sharing system. The remaining 2303 users were considered for user-matching process. The base price was taken directly from the fare amount mentioned in the dataset. Each traveling cost was a variable depending upon the value of the cost factor. 
The mean of the price threshold distribution for the inactive users was selected by randomly sampling from a uniform distribution over the range~$[0, \text{trip fare amount}]$. 
The standard deviation of the price threshold distribution was assumed to be fixed for all the inactive users. We performed several experiments to test the performance of the proposed N-user matching model. First, we discuss the performance of the proposed model w.r.t, the baseline models in-terms of the total number of users that were matched. Next, we test the robustness of the three models against the risk factor $\alpha$ and cost factor. Finally, we illustrate the contribution of chains of different lengths to the company profit.

\subsection{Served Users}
As illustrated in Fig. \ref{fig:served_user_count}, our proposed model successfully served 965 users, constituting approximately 39\% of the total user base. A baseline is established by the A-to-A demand of 110 users, roughly 4\% of the total. Consequently, the proposed model can accommodate an additional 35\% of demands.
\begin{figure}[tb]
    \centering
    \includegraphics[width=0.35\textwidth]{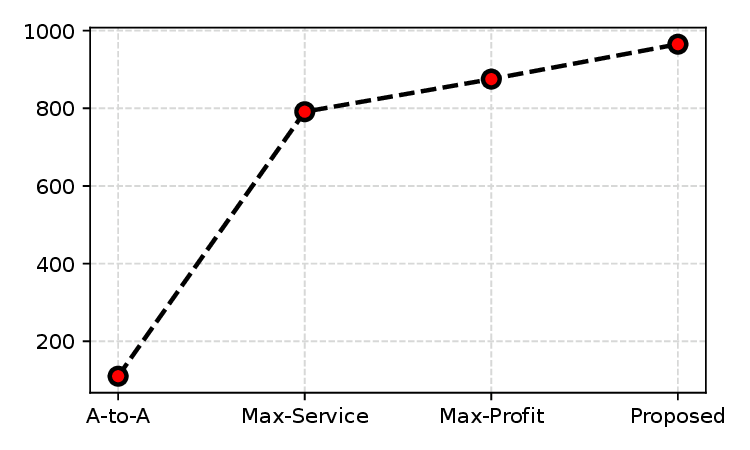}
    \caption{Served user count vs. model.}
    \label{fig:served_user_count}
\end{figure}

\subsection{Risk Factor Analysis}
As depicted in Fig. \ref{fig:profit_vs_alpha} and \ref{fig:sd_vs_alpha}, an increase in the risk factor correlates with a downward trend in profit and service rate for both the service maximization and profit maximization models. Nevertheless, the proposed model exhibits relative robustness to varying risk factors. Compared to the Profit Maximization model, it yields a 19\% higher profit, and compared to the Service Maximization Model, it achieves a 52\% higher profit.
\begin{figure}[tb]
    \centering
\includegraphics[width=0.35\textwidth]{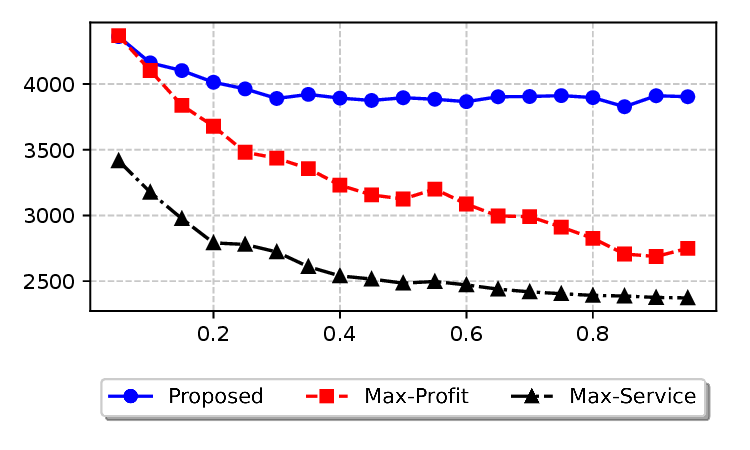}
    \caption{Profit (USD/h) vs risk factor~$\alpha$.}
    \label{fig:profit_vs_alpha}
\end{figure}

Service rate is the percentage of the users that got served. The service rate for the proposed model is 16\% more than the Profit Maximization model, and 17\% more than the Service Maximization model. 
\begin{figure}[tb]
    \centering
\includegraphics[width=0.35\textwidth]{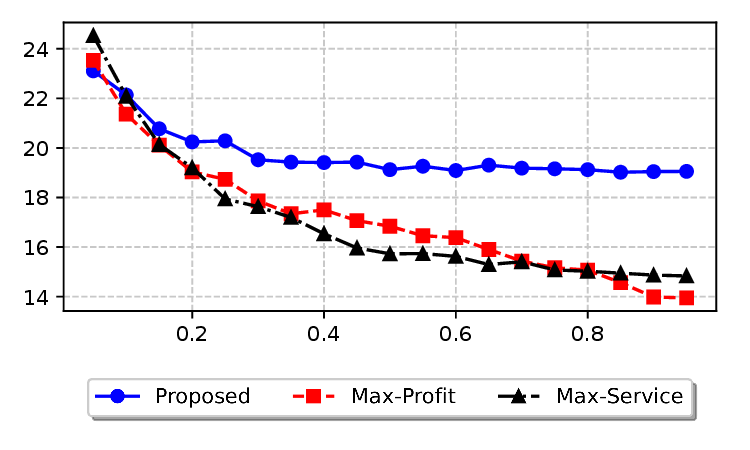}
    \caption{Service rate (\%) vs risk factor~$\alpha$.}
    \label{fig:sd_vs_alpha}
\end{figure}
Since the profit and service rate for the proposed model remains almost constant, it is advantageous because the company doesn't need to give very high incentives (low rental price) by keeping low $\alpha$ values to ensure chain formation.   

\subsection{Cost Factor Analysis}
As the cost factor increases, the expense of each trip rises accordingly. Consequently, as shown in Figure \ref{fig:profit_vs_cf}, profits decrease sharply with rising cost factors, reaching zero at a cost factor of 1. Beyond this point, the company incurs losses greater than its earnings. Therefore, if the cost of chain formation exceeds the total revenue at the base rental price, user matching becomes economically unfeasible. With a constant risk factor of 0.5 and a cost factor ranging from 0 to 1, the proposed model achieves a 21\% higher profit than the profit maximization model and a 57\% higher profit than the service rate maximization model.
\begin{figure}[tb]
    \centering
\includegraphics[width=0.35\textwidth]{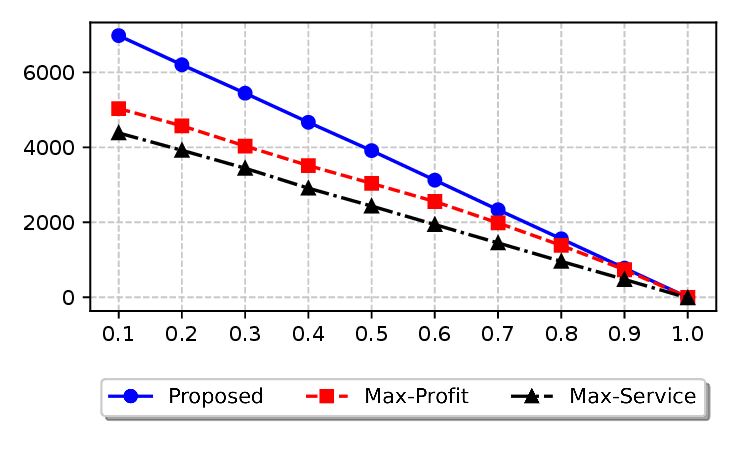}
    \caption{Profit (USD/h) vs cost factor.}
    \label{fig:profit_vs_cf}
\end{figure}

As depicted in Figure \ref{fig:sr_vs_cf}, the service rate for both the proposed model and the Service Maximization model remains relatively stable, while it shows an upward trend for the Profit Maximization model. This increase is attributed to the fact that since the cost factor is increased, more users are served to ensure higher profits. Additionally, it is important to note that both the proposed model and the Profit Maximization model experience a drop to zero service rate when the cost factor reaches 1, as the profit component of the objective function becomes null at this point.
\begin{figure}[tb]
    \centering
\includegraphics[width=0.35\textwidth]{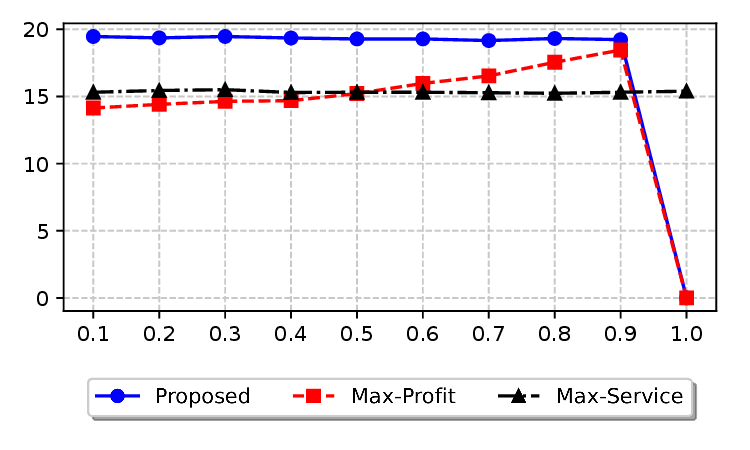}
    \caption{Service rate (\%) vs cost factor.}
    \label{fig:sr_vs_cf}
\end{figure}

\subsection{Chain Length Analysis}
In this experiment, chain lengths vary between 2 and 5. Figure \ref{fig:pr_vs_cf} illustrates the percentage contribution of each chain length to the company's total profit, with chains of length 3 contributing the most. Similarly, Figure \ref{fig:sd_vs_cf} displays a corresponding trend in the service rate.
\begin{figure}[b]
\centering    \includegraphics[width=0.35\textwidth]{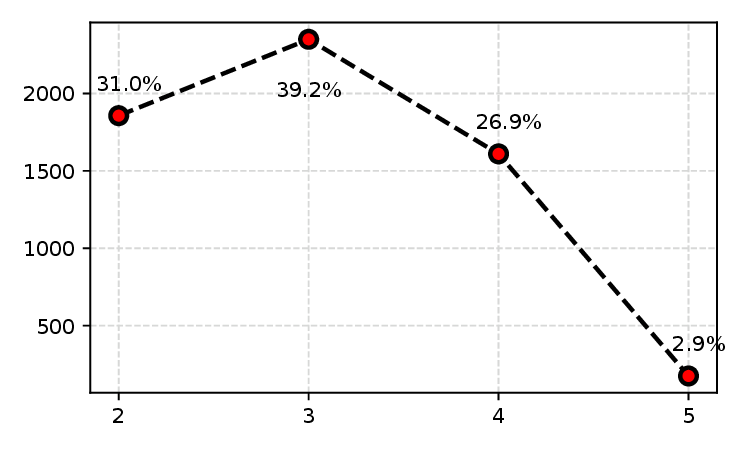}
    \caption{Profit contribution vs chain length.}
    \label{fig:pr_vs_cf}
\end{figure}

\begin{figure}[tb]
    \centering
\includegraphics[width=0.35\textwidth]{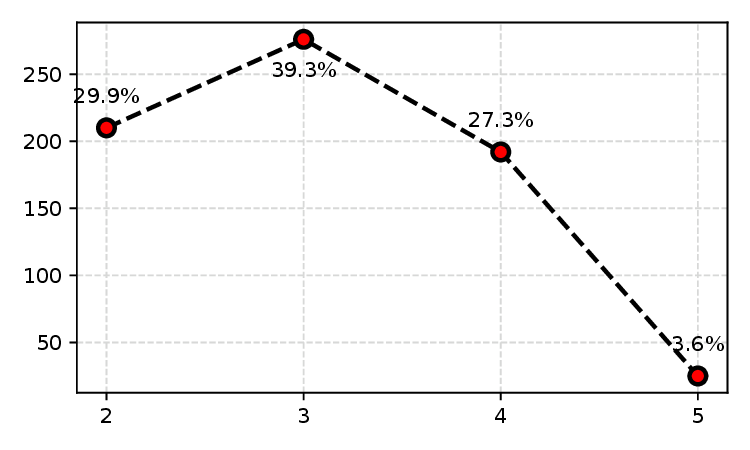}
    \caption{Served users vs chain length.}
    \label{fig:sd_vs_cf}
\end{figure}

 Too short chains lead to a lower service rate and profit whereas too long chains are less likely to exist. Thus, the chain length that has both a high likelihood of existence as well as a high associated reward is somewhere in the middle, which is the chain of length 3 in this case.

\begin{table*}
\let\mc\multicolumn
\caption{Cost Factor vs Risk Factor Analysis}
\label{my-label}
\newcolumntype{M}[1]{>{\raggedright\arraybackslash}p{#1}}
\newcolumntype{C}{>{\centering\arraybackslash}X} 
\begin{tabularx} {\textwidth}{@{} l M{1.5cm} *{10}{C} @{}}
\toprule  
&Risk Factor & \mc2c{0.2} & \mc2c{0.4} & \mc2c{0.6} & \mc2c{0.8} & \mc2c{1.0} \\ 
\cmidrule(r){1-12} 
  Cost Factor& Model& Profit & Service & Profit & Service& Profit & Service & Profit & Service & Profit & Service \\
\midrule
0.2        &Max-Service  & 4705.82          & 18.99\%   & 4164.73  & 16.44\%    & 3940.42   & 15.47\%    & 3828.86     & 14.97\%   & 3786.50    & 14.80\%  \\ 
           &Max-Profit  & 5474.95          & 17.30\%   & 4973.34  & 15.73\%    & 4552.21   & 14.50\%    & 4484.62     & 14.10\%   & 4161.95    & 13.24\% \\
           &Proposed & \textbf{6342.57} & \textbf{19.87\%}   & \textbf{6132.59}  & \textbf{19.03\%}    & \textbf{6197.99}   & \textbf{19.44\%}    & \textbf{6236.54}    & \textbf{19.11\%}   & \textbf{6245.17}    & \textbf{19.04\%} \\ \hline
0.4        & Max-Service        & 3475.87        & 18.92\%    & 3083.45   & 16.27\%    & 2965.69   & 15.53\%    & 2892.02     & 15.08\%   & 2845.39    & 14.83\%  \\ 
            &Max-Profit & 4404.19          & 18.79\%   & 3752.31  & 16.20\%    & 3473.36   & 14.75\%    & 3278.32     & 13.96\%   & 3274.83    & 13.77\% \\
            &Proposed & \textbf{4756.35}         & \textbf{20.08\%}   & \textbf{4674.70}  & \textbf{19.37\%}    & \textbf{4659.35}   & \textbf{19.24\%}    & \textbf{4681.11}     & \textbf{19.13\%}  & \textbf{4686.79}    & \textbf{19.12\%} \\ \hline
0.6       & Max-Service       & 2261.16          & 19.20\%  & 2063.43  & 16.53\%   & 1967.64  & 15.62\%   & 1916.71    & 15.02\%    & 1896.48     & 14.83\%  \\
        &Max-Profit & 2925.23          & 19.04\%   & 2766.37  & 17.5\%    & 2590.77   & 16.38\%    & 2394.58     & 15.07\%   & 2215.73    & 13.95\% \\
        &Proposed & \textbf{3160.30}         & \textbf{20.08\%}  & \textbf{3106.56}  & \textbf{19.37\%}    & \textbf{3084.65}   & \textbf{19.24\%}    & \textbf{3118.75}     & \textbf{19.13\%}   & \textbf{3127.89}    & \textbf{19.12\%} \\ \hline
0.8       & Max-Service       & 1036.44          & 18.98\%  & 993.17  & 16.36\%   & 975.37  & 15.50\%   & 954.77    & 15.05\%    & 946.28     & 14.80\%  \\
        &Max-Profit & 1544.27          & 19.97\%   & 1457.24  & 18.68\%    & 1394.79   & 17.64\%    & 1321.47     & 16.53\%   & 1133.30    & 14.43\% \\
        &Proposed & \textbf{1582.46}         & \textbf{20.36\%}   & \textbf{1554.10}  & \textbf{19.37\%}    & \textbf{1558.35}   & \textbf{19.31\%}    & \textbf{1562.54}     & \textbf{19.22\%}   & \textbf{1561.65}    & \textbf{19.11\%} \\ \hline
1.0       & Max-Service       & -189.28          & \textbf{19.19\%}  & -55.61  & \textbf{16.25\%}   & -20.91  & \textbf{15.55\%}   & -6.16    & \textbf{15.03\%}    & -1.90     & \textbf{14.83\%} \\
        &Max-Profit & 0.41          & 0.02\%   & 1.06  & 0.03\%    & 0.34   & 0.01\%    & \textbf{0.50}     & 0.06\%   & 0.12    & 0.02\% \\
        &Proposed & \textbf{3.58}         & 0.17\%   & \textbf{1.46}  & 0.04\%    & \textbf{1.09}   & 0.03\%   & 0.41     & 0.03\%   & \textbf{0.40} & 0.06\% \\ 
\bottomrule
\end{tabularx}
\end{table*}

\section{CONCLUSIONS}
In this paper, we have formulated an integrated user-matching and pricing model to efficiently handle one-way car-sharing trip requests while operating in the round-trip car-sharing setting. The robustness of the proposed model was tested under varied risk factors, cost factors, and chain lengths. Extensive simulations on real-world data showed the benefit of using the proposed model in increasing the fleet-utilization as well as company profits of round-trip car sharing. For our forthcoming research, our objective is to integrate the strategic behaviors of individual users within the game-theoretic framework into the N-User Matching problem.

\addtolength{\textheight}{-12cm}   









\bibliographystyle{IEEEtran}
\bibliographystyle{IEEEtran}

\end{document}